\newtheorem{Thm}{Theorem}
\newtheorem{Prop}{Proposition}
\newtheorem{Def}{Definition}
\newtheorem{Cor}{Corollary}
\newtheorem{Example}{Example}
\newtheorem{Rmk}{Remark}
\begin{document}

\begin{frontmatter}
\journal{}

\title{A secure solution on hierarchical access control }
 \author[Wei-Huang]{Chuan-Sheng Wei}\ead{cswei@fcu.edu.tw}
  \author[Chen]{Sheng-Gwo Chen\corref{corresponding}}
 \author[Wei-Huang]{Tone-Yau Huang}\ead{huangty@fcu.edu.tw}
 \author[Ong]{Yao Lin Ong}\ead{ylong@mail.cjcu.edu.tw}

\cortext[corresponding]{Corresponding author email :
csg@mail.ncyu.edu.tw}
\address[Wei-Huang]{Department of Applied Mathematics , Feng Chia University, Taichung 407, Taiwan.  }

\address[Chen]{Department of Applied Mathematics, National Chiayi University,  Chia-Yi 600,
Taiwan.}
\address[Ong]{Department of Accounting and Information System, Chang Jung Christian University, Tainan 711, Taiwan}

\begin{abstract}
Hierarchical access control  is an important and traditional problem
in information security. In 2001, Wu et.al. proposed an elegant
solution for hierarchical access control by the secure-filter. Jeng
and Wang presented an improvement of Wu et. al.'s method by the ECC
cryptosystem. However, secure-filter method is insecure in dynaminc
access control. Lie, Hsu and Tripathy, Paul pointed out some secure
leaks on the secure-filter and presented some improvements to
eliminate these secure flaws. In this paper, we revise the
secure-filter in Jeng-Wang method and propose another secure
solutions in hierarchical access control problem.  CA is a super
security class (user) in our proposed method and the secure-filter
of $u_i$ in our solutions is a polynomial of degree $n_i+1$ in
$\mathbb{Z}_p^*$, $f_i(x)=(x-h_i)(x-a_1)\cdots(x-a_{n_i})
+L_{l_i}(K_i)$. Although the degree of our secure-filter is larger
than others solutions, our solution is secure and efficient in
dynamics access control.

\end{abstract}

\begin{keyword}
Hierarchial access control, Key management, Secure-filter, Dynamic
access control, ECC.
\end{keyword}
\end{frontmatter}

\section{Introduction}
The access control problems in a computer (communication) system are
structured as a user hierarchy. Under such a hierarchy, the users
and their own data are organized into a number of disjoint set of
security classes and each user is assigned to a security class
called the user's security clearance. Let $S=\{ u_1, u_2, \cdots,
u_n\}$ be a set of $n$-disjoint security classes. Assume that
``$\leq$'' is a partial ordering on $S$. In $(S, \leq)$, $u_j \leq
u_i$ means that the user $u_i$ has a security clearance higher than
or equal to the user $u_j$ in the security class $(S,\leq)$. In
other words, the user $u_i$ can read the information items which
belong to the user $u_j$, but $u_j$ cannot read the data which
belong to $u_i$ in this partial ordering system. $u_j$ is called the
successor of $u_i$ and $u_i$ is called the predecessor of $u_j$ if
$u_j \leq u_i$.

The first solution of hierarchical access control problem is
developed by  Akl and Taylor\cite{Akl} in 1983. They assigned a
public integer $t_i$ to each security user $u_i$ with the property
that $t_i | t_j$ iff $u_j \leq u_i$. The central authority (CA) in
their scheme assigned a security number $K_0$ and the cryptographic
key for $u_i$ is computed by $K_i = K_0^{t_i}  (\textrm{mod } p)$
where $p$ is a large prime number. The predecessor $u_i$ of $u_j$
can obtain the cryptographical key $K_j$ by
\begin{equation}
K_j = (K_0^{t_i} )^{\frac{t_j}{t_i}} = (K_i)^{\frac{t_j}{t_i}} ( \textrm{mod } p).
\end{equation}
Obviously, the user $u_i$ can obtain $K_j$ for some user $u_j$ if
and only if $u_i$ is a predecessor of $u_j$, that is $u_j \leq u_i$.
However, the size of the public information $t_i$ will grow linearly
with the number of security classes in Alk and Taylor's scheme.
Hence, MacKinnon et.al.\cite{MacKinnon} and Harn et.al.\cite{Harn}
presented different key assignment schemes to improve Akl and
Taylor's method, respectively. However, all secret keys must be
re-generated when a security class is added into or deleted from the
user hierarchy in these methods. The dynamic access control problems
 cannot be efficiently solved.

Since then, several solutions\cite{Chang1,Chang2,Chick} have been
proposed to archive the target of the dynamic access control problem
in hierarchy by keeping the size of public information as small as
possible. However, these approaches are inefficient if the user with
a high security of his successor without an immediate one. In 1995,
Tsai and Chang\cite{Tsai} presented an elegant key management scheme
based on the Rabin public cryptosystem and Chinese remainder
theorem. Their method is efficient in the key-generation and
key-derivation processes, but a group of secret data is required to
be held by each security class.

Recently,  Wu et.al.\cite{Wu} proposed a novel key management scheme
by the secure-filter function to solve the dynamic access control
problems. The secure-filter function that Wu et.al. used is defined
as
\begin{equation}
f_i(x) = \prod_{t} (x-g_i^{s_t}) + K_i
\end{equation}
for all $u_i < u_t$ where $s_t$ is the secret number of $u_i$ and
$g_i$ is a public random integer chosen by $u_i$ (see \cite{Wu} for
further details). In 2006, Jeng and Wang\cite{Jeng}  presented an
efficient solution of access control problem in hierarchy by
combining the secure-filter and ECC cryptosystem. The secure-filter
in Jeng-Wang  method  is given by
\begin{equation}
f_i(x) = \prod_t (x - \tilde{A}(n_iP_t)) +K_i
\end{equation}
for all $u_i < u_t$ where $\tilde{A}$ is a function from an elliptic
group to $\mathbb{Z}_p^*$ (see \cite{Jeng} for the details ). The
major advantages of Wu et.al's and Jeng-Wang schemes were to solve
the dynamic key management efficiently. Both of them are not
necessary to re-generate keys for all the security classes in the
hierarchy when the security class is added into or deleted from the
user hierarchy.

In 2009-2011, Lin and Hsu\cite{Lin,Lin2} pointed out a security leak
of the Jeng-Wang method. To eliminate this security flaw, Lin and
Hsu revised the secure-filter to
\begin{equation}
f_i(x) = \prod_t (x-h(r\|\tilde{A}(n_iP_t))) + K_i
\end{equation}
for all $u_i < u_t$, where $h$ is an one-way hash function and $r$
is a random number. Lin-Hsu scheme is an effective secure method to
overcome the security leak of Jeng-Wang method. However, this method
must be re-computed the secure-filter $f_i$ for each $i$ when an
user is added into or deleted from the hierarchy\cite{Lin}. Tripathy
and Paul\cite{Tripathy} also proposed another attacked method on
Jeng-Wang method in dynamic access control problem in 2011.

In this paper, we propose a novel secure-filter to overcome  Lin-Hsu
and Tripathy-Paul attacks. In our method, the CA needs to choose two
addition random integers. One is a secure number $n_i$ and the other
is a public number $h_i$. We choose the secure-filter as
\begin{equation}
f_i(x)=(x-h_i) \cdot \left (\prod_t (x - \tilde{A}(n_iP_t)) \right ) + L_{l_i}(K_j)
\end{equation}
where $L_{l_i}$ is the cyclic $l_i$-shift operator. Our proposed
methods eliminate all security flaws of secure filter in Jeng-Wang
method that we know and is efficient in dynamic access control.

In section 2, we introduce Wu et.al. and Jeng-Wang key management
schemes, the security leaks of Jeng-Wang method and Lin-Hsu
improvement. We propose our method in section 3 and discuss the
dynamic access control problem in section 4.

\section{Preliminaries}
Let $S=\{u_1,u_2,\cdots, u_n\}$ be a set with a partial ordering
``$\leq$''. $u_i \leq u_j$ means that the user $u_j$ has a security
clearance higher than or equal to $u_i$. For simplicity,  $u_i <
u_j$ means that the user $u_j$ has a security clearance higher than
$u_i$ and $u_j$ is called a strictly predecessor of $u_i$. Denote
the set of all strictly predecessors of $u_i$ by $S_i=\{u_j | u_i <
u_j , u_j \in S\}$ .
\begin{Def}\cite{Wu}
The secure-filter on $K$ with respect to $\Sigma = \{ s_i | s_i \in
\mathbb{Z}_p, 0 \leq i \leq n-1 \}$ is a polynomial $f$ with the
indeterminate $x$ over a Galois field $\textrm{GF}(p)$ such that if
$s \in \Sigma$, then $f(s) \equiv  K ( \textrm{mod }  p)$ where $p$
is a public large prime.
\end{Def}

In Jeng-Wang and Wu et.al.s'  methods,  CA  publishes these
coefficients $a_{n-1}, a_{n-2}, \cdots, a_0$  of the general form of
$f$,

\begin{equation}
f(x) = x^n + a_{n-1}x^{n-1} + \cdots + a_1x + a_0.
\end{equation} .

\subsection{ Two solutions of hierarchical access control problems}

Now, let us introduce Wu et.al and Jeng et.al.s' methods roughly.
\begin{description}
\item[Wu et. al.'s scheme\cite{Wu}]: \\
\begin{description}
\item[Key generation phase] - \\
\begin{enumerate}
\item Choose distinct secret integers $s_i \in \mathbb{Z}_p$, $ 0 \leq i \leq n-1$.
\item For each $u_i \in S$ do step 3-5
\item Determine the set $\Sigma_i = \{ g_i^{s_j} | u_j \in S_j \}$.
\item Set
\begin{equation}
f_i(x) = \left \{ \begin{array}{ll}
\begin{array}{l}
(x - g_i^{s_0})(x -g_i^{s_1})\cdots(x-g_i^{s_{N_i}}) +K_i \cr = \prod_{a \in \Sigma_i} (x-a) +K_i \end{array} & \mbox{ if } S_i \neq \phi \cr\cr
 0 & \mbox{ if } S_i = \phi
\end{array}. \right.
\end{equation}
\item Assign $s_i, g_i$ and all coefficients of the general form of $f_i$ to $u_i$; $u_i$ keeps $s_i$ and $K_i$ private, and makes $(f_i, g_i)$ public.
\end{enumerate}

\item[Key derivation phase ] -

Consequently, the predecessor $u_j$ of $u_i$ can   compute
$g_i^{s_j}$ from his secure number $s_j$ and the public
number $g_i$. Then the user $u_j$  obtain the cryptographic
key $K_i$ by $f_i(g_i^{s_j})$.

\end{description}

\item[Jeng-Wang method\cite{Jeng}]: \\
\begin{description}
\item[Initialization phase ]- \\
\begin{enumerate}
\item CA randomly choose a large prime $p$, and $a,b \in
  \mathbb{Z}_p$  with  $4a^3 + 27b^2 \textrm{ mod } p
  \neq 0$. There is a group of elliptic curve $E_p(a,b)$
over a Galois field $GF(p)$ that contains a set of
points $(x,y)$ with $x,y \in \mathbb{Z}^*_p$ satisfying
$y^2 = x^3+ax+b (\mbox{mod } p)$ and a point $O$ at
infinity. Let $G \in E_p(a,b)$ be a base point  whose
order is a large prime $q$.
\item CA selects an algorithm $\tilde{A} :(x,y) \rightarrow v$ for representing a point $(x,y)$ on $E_p(a,b)$ as a real number $v$.
\item CA publishes $(p,q,\tilde{A},E,G)$.
\end{enumerate}

\item[Key generation phase ] - \\
\begin{enumerate}
\item CA determines its security key $n_{ca} \in \mathbb{Z}_q$ and publishes the public key $P_{ca} =  n_{ca}G$.
\item Each security class $u_i$ chooses its encryption key $K_i$, a secret key $n_i \in \mathbb{Z}_q$ and a public key $P_i =n_iG$.
\item $u_i$ chooses a random integer $k$ and encrypts $(K_i,n_i)$ as a ciphertext $\{kG, (K_i,n_i)+kP_{ca} \}$, and sends this ciphertext to CA.
\item CA decrypts $\{kG, (K_i, n_i)+kP_{ca} \}$ to obtain $(K_i, n_i)$ by the equation
\begin{equation}
(K_i,n_i) = ((K_i,n_i)+kP_{ca}) - n_{ca}(kG).
\end{equation}
\item CA generate the secure-filter
\begin{equation}
f_i(x) = \left \{ \begin{array}{ll}
\begin{array}{l}
(x -\tilde{A}(n_1P_i)) \cdots(x-\tilde{A}(n_{N_i}P_i)) +K_i \cr
 = \prod_{u_j \in S_i} (x-\tilde{A}(n_jP_i)) +K_i \end{array} & \mbox{ if } S_i \neq \phi \cr\cr
 0 & \mbox{ if } S_i = \phi
\end{array}. \right.
\end{equation}

\end{enumerate}

\item[key derivation phase] -

When $u_j$ is a predecessor of $u_i$, $u_j$ can use the
secure key $n_j$, the public key $P_i$ and the public
secure-filter $f_i$ to derive $K_i =
f_i(\tilde{A}(n_jP_i))$.

\end{description}

\end{description}

\subsection{Two attacks on Jeng-Wang method}
Both of above solutions are secure in a static hierarchical system,
it means that no user is added into or deleted from $S$. However,
these schemes are insecure when a user joins or removes from the
hierarchy. Two different attacks on Jeng-Wang method have be
presented. We discuss these attacks on the Jeng-Wang method (and the
Wu et. al.'s method) in this subsection. For simplicity, we denote
the secure-filter $f_i(x)$ of a security class $u_i$ as
\begin{equation}\label{secure_filter_f}
f_i(x) = (x-c_{i,1})(x-c_{i,2})\cdots(x-c_{i,N_i}) + K_i.
\end{equation}
The value  $c_{i,j}$ plays the same role as $g_i^{s_j}$ in Wang et.
al.'s scheme or
 $\tilde{A}(n_jP_i)$ in Jeng-Wang method.

\begin{description}
\item[Lin-Hsu attack\cite{Lin,Lin2}] : \\
Without lose of  the generality, we only consider the situation
that there is a new security class joining in hierarchy. Let
$u_i$ be a security class (user) in $S$, $S_i =\{ u_j | u_i <
u_j \mbox{ and } u_j \in S\}$ be the set of strictly
predecessors of $u_i$ and $u_N$ be a new predecessor of $u_i$.
Assume that
$$f_i(x)=(x-c_{i,1})(x-c_{i,2})\cdots(x-c_{i,N_v}) + K_i $$ is the secure-filter of $u_i$ in the original hierarchy
and
$$\tilde{f}_i(x) = (x-c_{i,1})(x-c_{i,2})\cdots(x-c_{i,N_v})(x-c_{i,N}) + K_i $$
is the new secure-filter in the new hierarchy  after $u_N$ is
inserted.

One has
\begin{equation}
\tilde{f}_i(x) - f_i(x) = (x - c_{i,1})(x-c_{i,2})\cdots(x - c_{i,N_i})(x-c_{i,N}-1).
\end{equation}
Obviously, $c_{i,1}\cdots c_{i,N_i}$ are the solutions of
$\tilde{f}_i(x)-f_i(x) = 0$. The adversary can obtain $c_{i,j}$
for some $j \in \{1,2,\cdots,N_i\}$ by solving the equation $
(\tilde{f}_i - f_i)(x)=0$, and  then get the encryption key
$K_i$ of $u_i$ by $f_i(c_{i,j})$ or $\tilde{f}_i(c_{i,j})$.

\item[Tripathy-Paul attack\cite{Tripathy}] : \\
In 2011, Tripathy and Paul pointed out another attack on
Jeng-Wang method. With the assumptions in Lin-Hsu attack, the
general form of the secure-filters $f_i(x)$ and $\tilde{f}_i(x)$
are
\begin{equation}
f_i(x) = x^{N_i} + \alpha_{N_i -1}x^{N_i-1} + \cdots + \alpha_1x^1 + \alpha_0
\end{equation}
and
\begin{equation}
\tilde{f}_i(x) = x^{N_i+1}+\beta_{N_i}x^{N_i} + \cdots + \beta_1x^1+\beta_0,
\end{equation}
respectively. Since
\begin{equation}
\begin{array}{ll}
\alpha_{N_i-1} & = c_{i,1}+c_{i,2}+\cdots + c_{i,N_i}, \cr
 \beta_{N_i} & =  c_{i,1}+c_{i,2}+\cdots +c_{i,N_i} + c_{i,N},
 \end{array}
 \end{equation}  and $\alpha_j$,
$\beta_k$ for each $j,k$ are public information, the adversary
can obtain $c_{i,N}$ by subtracting $\beta_{N_i}$  from
$\alpha_{N_i-1}$, and the encryption key $K_i =
\tilde{f}_i(c_{i,N})$.
\end{description}

\subsection{ Lin-Hsu improvement}
  Lin and Hsu \cite{Lin,Lin2} presented an
improvement of Jeng-Wang scheme. They used a random number $r$ and
an one-way hash function $h(\cdot)$ replacing $(\tilde{A}(n_jP_i),
K_i)$ in Jeng-Wang scheme with $(h(r\|\tilde{A}(n_jP_i)), K_i)$.
That is, the value $c_{i,j}$ in Eq.(\ref{secure_filter_f}) for each
$j$ should be changed when a security class is added in or deleted
from in hierarchy. This implies that $\tilde{A}(n_jP_i)$ is not
solution of $\tilde{f}_i(x)-f_i(x)=0$ anymore. Hence, this method
eliminates the security flaw on Jeng-Wang method effectively (see
\cite{Lin,Lin2} for the details).

\section{Our proposed method}

First, let us introduce some definitions and   properties of the
cyclic $l$-shift operator.

\begin{Def}
Let $b$ be a positive integer greater than 1. The radix (base) $b$
expansion of a positive integer $k$ is a unique expansion of $k$ as
\begin{equation}
k=k_{m+1} b^m+ k_{m}b^{m-1}+ \cdots +k_2b +k_1
\end{equation}
where $ k_i \in \{0,1,\cdots, b-1 \}$ for each $i$ and $k_{m+1} \neq
0$. This expansion is written as $(k_{m+1}k_{m}\cdots k_1)_b$. The
integer $k_i$ is called the i-th coefficient of the radix $b$
expansion of $k$, whereas  $k_{m+1}$ is   the leading coefficient of
the radix $b$ expansion of $k$.
\end{Def}

\begin{Def}
Let $b$ be a positive integer greater than 1 and $l$ be an integer.
The cyclic $l$-shift operator $L_l(\cdot)$  based on $b$ is defined
by
\begin{equation}
   L_l(k) = (k_{m+1}k_{\sigma^l(m)}k_{\sigma^l(m-1)}\cdots k_{\sigma^l(1)})_b = k_{m+1}b^m + \sum_{i=1}^m k_{\sigma^l(i)}b^{i-1},
\end{equation}
where $(k_{m+1}k_m\cdots k_1)_b$ is the radix $b$ expansion of $n$
and $\sigma =(1,2,\cdots,m)$ is the m-cyclic in the symmetric group
$S_m$ of degree $m$.

\end{Def}
\begin{Example}
If $b=10$, then $ L_1(21349) = 23491$, $L_2(21349) = 24913$, $L_3(21349) = 29134$ and $L_4(21349)=21349$. \\
If $b=2$, then $L_1( (11110)_2 ) = (11101)_2$, $L_2( (11110)_2 ) =
(11011)$, $L_3((11110)_2) = (10111)_2$ and $L_4((11110)_2) =
(11110)_2$.
\end{Example}

\begin{Prop}\label{prop1}
Let $b$ be a positive integer greater than 1 and   $p$ be  a
positive integer greater than $b$ with the radix $b$ expansion
$(p_{m+1}p_m\cdots p_1)_b$. Suppose $k$ is a positive integer less
than p and there exists $m+1$ nonnegative integers
$k_{m+1},k_m,\cdots,k_1$ less than $b$ such that $k =
\sum_{i=1}^{m+1}k_ib^{i-1}$. We still denote it by
$(k_{m+1}k_m\cdots k_1)_b$ and $k_{m+1}$ may be zero.

If $k_{m+1} < p_{m+1}$,  one has
\begin{equation}
L_l(k) = (k_{m+1}k_{\sigma^l(m)}k_{\sigma^l(m-1)}\cdots k_{\sigma^l(1)})_b < p.
\end{equation}
Furthermore, if $k \in \mathbb{Z}_p^*$, then
\begin{equation}
L_{-l}(L_l(k)) = k
\end{equation}
in $\mathbb{Z}_p^*$.

\end{Prop}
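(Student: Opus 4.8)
The plan is to treat the two assertions in turn, working throughout with the fixed ``width'' $m+1$ coming from $p=(p_{m+1}p_m\cdots p_1)_b$. The key conceptual point is that $L_l$ and $L_{-l}$ must be read relative to this common width — padding $k$ with a leading zero when necessary, exactly as the statement allows ``$k_{m+1}$ may be zero'' — because it is only with respect to a fixed length that cyclic shifting of the low-order digits is a bijection, hence only then that $L_{-l}$ undoes $L_l$.

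For the first assertion I would start from the defining formula $L_l(k)=k_{m+1}b^m+\sum_{i=1}^m k_{\sigma^l(i)}b^{i-1}$ and bound the tail. Each coefficient $k_{\sigma^l(i)}$ lies in $\{0,\dots,b-1\}$, so $\sum_{i=1}^m k_{\sigma^l(i)}b^{i-1}\le (b-1)\sum_{i=0}^{m-1}b^i=b^m-1$, whence $L_l(k)\le (k_{m+1}+1)b^m-1$. The hypothesis $k_{m+1}<p_{m+1}$ gives $k_{m+1}+1\le p_{m+1}$, so
\begin{equation}
L_l(k)\le p_{m+1}b^m-1<p_{m+1}b^m\le p_{m+1}b^m+\sum_{i=1}^m p_ib^{i-1}=p.
\end{equation}
Since moreover every entry of the tuple $(k_{m+1}k_{\sigma^l(m)}\cdots k_{\sigma^l(1)})_b$ lies in $\{0,\dots,b-1\}$, uniqueness of the width-$(m+1)$ radix-$b$ representation shows this tuple really is the digit string of $L_l(k)$, which is the claim.

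For the second assertion, assume in addition $k\in\mathbb{Z}_p^*$, i.e. $0<k<p$, and put $k'=L_l(k)$. By the first part $k'<p$, and $k'\neq 0$, because $k\mapsto k'$ merely permutes the low-order digits $k_1,\dots,k_m$ among themselves and leaves $k_{m+1}$ fixed, so $k\neq 0$ forces some digit of $k'$ to be nonzero; hence $k'\in\mathbb{Z}_p^*$. Its width-$(m+1)$ digit string is $k'_{m+1}=k_{m+1}$ and $k'_i=k_{\sigma^l(i)}$ for $1\le i\le m$, and $k'_{m+1}=k_{m+1}<p_{m+1}$, so $L_{-l}$ applies legitimately to $k'$. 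Then, using that $\sigma=(1,2,\dots,m)$ is the $m$-cycle so $\sigma^l\circ\sigma^{-l}$ is the identity of $S_m$,
\begin{equation}
L_{-l}(k')=k'_{m+1}b^m+\sum_{i=1}^m k'_{\sigma^{-l}(i)}b^{i-1}=k_{m+1}b^m+\sum_{i=1}^m k_{\sigma^l(\sigma^{-l}(i))}b^{i-1}=k_{m+1}b^m+\sum_{i=1}^m k_ib^{i-1}=k,
\end{equation}
so $L_{-l}(L_l(k))=k$, in fact as integers and a fortiori in $\mathbb{Z}_p^*$.

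I do not expect a genuine obstacle here; the proof is elementary. The only things that need care are the bookkeeping of the fixed width — insisting that both $L_l$ and $L_{-l}$ act on $(m+1)$-digit strings matched to $p$, and invoking uniqueness of such strings — and the trivial group-theoretic fact that $\sigma^l$ and $\sigma^{-l}$ are inverse permutations of $\{1,\dots,m\}$. The one small point easy to overlook is checking $L_l(k)\neq 0$, which is what guarantees the image genuinely lies back in $\mathbb{Z}_p^*$ rather than merely in $\mathbb{Z}_p$.
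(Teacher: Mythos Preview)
Your proof is correct and follows essentially the same route as the paper: bound the tail $\sum_{i=1}^m k_{\sigma^l(i)}b^{i-1}<b^m$, combine with $k_{m+1}<p_{m+1}$ to get $L_l(k)<p_{m+1}b^m\le p$, and then invoke that $\sigma^{-l}$ inverts $\sigma^{l}$ on $\{1,\dots,m\}$ to recover $k$. You are simply more explicit than the paper about the fixed width, the uniqueness of the digit string, and the check that $L_l(k)\neq 0$.
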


\begin{proof}
Since $ \sum_{i=1}^m k_{\sigma^l(i)}b^{i-1} < b^m $ and $ k_{m+1} <
p_{m+1}$, one has
$$ L_l(k) = k_{m+1}b^m + \sum_{i=1}^m k_{\sigma^l(i)}b^{i-1} < k_{m+1}b^m+b^m \leq p_{m+1}b^m \leq p.$$
This implies that
$$L_l(k)\equiv (k_{m+1}k_{\sigma^l(m)}k_{\sigma^l(m-1)}\cdots k_{\sigma^l(1)})_b  \mbox{ (mod } p) .$$
Hence,
$$L_{-l}(L_l(k)) = L_{-l}((k_{m+1}k_{\sigma^l(m)}k_{\sigma^l(m-1)}\cdots k_{\sigma^l(1)})_b) =(k_{m+1}k_m\cdots k_1)$$
in $\mathbb{Z}_p^*$.
\end{proof}

\begin{Rmk}
In general, $L_{-l}(L_l(k))$ may not be equal to $k$ in
$\mathbb{Z}_p^*$ even if $k < p$. For instance, if $b=10$, $p=239$,
$l=1$ and $k=235$. Then
$$ L_{-1}(L_1(235)) = L_{-1}( 253 \mbox{ mod } 239 ) = L_{-1}(014) = 41$$
in $\mathbb{Z}_{239}^*$.

\end{Rmk}
 Now, we discuss our solution of the access control in hierarchy. In our method,
CA is a super security class  which should  choose a secrete integer
$h_i$ and a public integer $l_i$ for each secure-filter $f_i$. Let
$L_l(x)$ be a cyclic $l$-shift operator. Then the secure-filter in
our proposed method is given by
\begin{equation}
f_i(x) = (x-h_i)(x-a_{i,1})\cdots(x-a_{i,N_i}) +L_{l_i}(K_i)
\end{equation}
where $a_{i,j}$ plays the same role as  $\tilde{A}(n_jP_i)$ in
Jeng-Wang method.

\begin{description}
\item[Method 1:] (Revised Jeng-Wang method) \\
\begin{description}
\item[Initialization phase ]- \\
    \begin{enumerate}
    \item CA randomly chooses a small positive integer $d$ greater than 1, a large prime $p$, and $a,b \in
      \mathbb{Z}_p^*$ satisfying that $4a^3 + 27b^2
      \textrm{ mod } p \neq 0$ Let $E_p(a,b)$ be an
      elliptic curve $y^2=x^3+ax+b \mbox{ (mod }p)$ over
      a Galois field $GF(p)$ containing a set of points
    $(x,y)$ with $x,y \in \mathbb{Z}^*_p$ and a point
    $O$ at infinity. CA selects  a base point $G$ of
    $E_p(a,b)$  whose order is a  large prime $q$.
    \item CA selects an algorithm $\tilde{A} :(x,y) \rightarrow v$, for representing a point $(x,y)$ on $E_p(a,b)$ as a real number $v$.
    \item CA publishes $(d,p,q,\tilde{A},E,G)$.
    \end{enumerate}

\item[Key generation phase ] - \\
    \begin{enumerate}
    \item CA determines its security key $n_{ca} \in \mathbb{Z}_q^*$ and publishes the public key $P_{ca} =  n_{ca}G$.
    \item Each security class $u_i$ chooses a secret key $n_i \in
     \mathbb{Z}_q^*$ , a public key $P_i =n_iG$ and  its
          encryption key $K_i$ such that the leading
          coefficient of the
          radix $d$ expansion of $K_i$ is less than the
     leading coefficient of the radix $d$ expansion of
     $p$ and all coefficients of the radix $d$ expansion
     of $K_i$ are not all equal.
    \item $u_i$ chooses a random integer $k$ and encrypts $(K_i,n_i)$ as a ciphertext $\{kG, (K_i,n_i)+kP_{ca} \}$, and sends this ciphertext to CA.
    \item CA decrypts $\{kG, (K_i, n_i)+kP_{ca} \}$ to obtain $(K_i, n_i)$ by the equation
    \begin{equation}
    (K_i,n_i) = ((K_i,n_i)+kP_{ca}) - n_{ca}(kG).
    \end{equation}
    \item CA randomly choose a secret integer $h_i$ and a public integer $l_i$.
    \item CA generates the secure-filter
    \begin{equation}
    \begin{array}{rl}
    f_i(x)  = & (x-h_i)(x -\tilde{A}(n_1P_i)) \cdots(x-\tilde{A}(n_{N_i}P_i)) +L_{l_i}(K_i) \cr
            = & \prod_{u_j \in S_i} (x-\tilde{A}(n_jP_i)) +L_{l_i}(K_i)
    \end{array}
    \end{equation}
    and publishes it.
    \end{enumerate}

\item[key derivation phase] -

When $u_j$ is a predecessor of $u_i$, $u_j$ can use the
secure key $n_j$, the public key $P_i$ and the public
secure-filter $f_i$ to derive $L_{l_i}(K_i) =
f_i(\tilde{A}(n_jP_i))$. Finally, $u_j$ obtain  the
cryptographic key $K_i = L_{-l_i}(L_{l_i}(K_i))$ by
Proposition \ref{prop1}.
\end{description}

\end{description}

\begin{figure}
{\centering
\includegraphics[width=.5\textwidth]{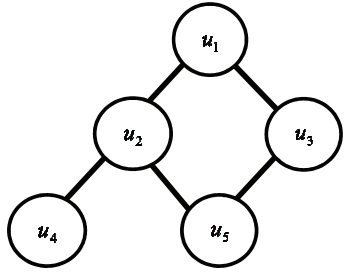}
\caption{ A hierarchy $(S,\leq)$. }\label{hierarchy_1}
}
\end{figure}

\begin{Example}
In Fig. \ref{hierarchy_1}, the user set has 5 security classes
denoted as $S=\{u_1,\cdots,u_5\}$. The secure-filter of $u_i$ in our
method forms as
\begin{equation}
f_i(x) = (x-h_i)\prod_{u_i <u_j} (x-\tilde{A}(n_jP_i)) + L_{l_i}(K_i).
\end{equation}
For simplicity, we denote $\tilde{A}(n_jP_i)$ as $a_{i,j}$ here and
after. All secure-filters in the hierarchy $S$ are
\begin{equation}
\begin{array}{ll}
u_1 : &  f_1(x) = 0 \cr
u_2 : & f_2(x)=(x- h_2)(x-a_{2,1})+L_{l_2}(K_2) \cr
u_3 : & f_3(x) = (x-h_3)(x-a_{3,1})+L_{l_3}(K_3) \cr
u_4 : & f_4(x) = (x-h_4)(x-a_{4,1})(x-a_{4,2})+L_{l_4}(K_4) \cr
u_5 : & f_5(x) = (x-h_5)(x-a_{5,1})(x-a_{5,2})(x-a_{5,3}) + L_{l_5}(K_5)
\end{array}
\end{equation}

\end{Example}

\subsection{ Security analysis}

When a new predecessor $u_N$ of $u_i$ is added in hierarchy, the
secure-filter of $u_i$,
 \begin{equation} f_i(x) =
(x-h_i)(x-a_{i,1})\cdots(x-a_{i,N_i}) +L_{l_i}(K_i),
\end{equation}
becomes
\begin{equation}
\tilde{f}_i(x) = (x-\tilde{h}_i)(x-a_{i,1})\cdots(x-a_{i,N_i}) +L_{\tilde{l}_i}(K_i).
\end{equation}

Because $h_i, \tilde{h}_i, l_i$ and $\tilde{l}_i$ can randomly be
chosen by CA, we may assume $h_i \neq \tilde{h_i}$ and $L_{l_i}(K_i)
\neq L_{\tilde{l}_i}(K_i)$. Therefore, $a_{i,1} \cdots a_{i,N_i}$
are not the solutions of $\tilde{f}_i(x)-f_i(x)=0$ and Lin-Hsu
attack fails. By comparing the coefficients of $x_{N_i}, x_{N_i-1}$
of $f_i(x)$ and the coefficients of $x_{N_i+1}, x_{N_i}$ of
$\tilde{f}_i(x)$, one has
\begin{equation}\label{coeff_x_N}
\left \{ \begin{array}{ll}
 \alpha_{N_i} &= a_{i,1}+a_{i,2}+\cdots+a_{i,N_i}+h_i \cr
 \beta_{N_i+1} &= a_{i,1}+a_{i,2}+\cdots+ a_{i,N_i}+a_{i,N}+\tilde{h}_i
\end{array} \right.
\end{equation}
and
\begin{equation}\label{coeff_x_N_1}
\left \{ \begin{array}{ll}
\alpha_{N_i-1} = \prod_{s\neq t} a_{i,s}a_{i,t} + h_i(\sum_s a_{i,s}) \cr\cr
\beta_{N_i} = \prod_{s \neq t} a_{i,s}a_{i,t} + (a_{i,N}+\tilde{h}_i)(\sum_s a_{i,s})+a_{i,N}\tilde{h}_i
\end{array}\right.
\end{equation}
where $1 \leq s,t \leq N_i$.

 Eq.(\ref{coeff_x_N}) and Eq.(\ref{coeff_x_N_1}) imply
\begin{equation}
\left \{
\begin{array}{ll}
\beta_{N_i+1} - \alpha_{N_i} & = a_{i,N}+\tilde{h}_i - h_i                 \cr
\beta_{N_i} - \alpha_{N_i-1} & = (a_{i,N}+\tilde{h}_i - h_i)(\sum a_{i,s}) + a_{i,N}\tilde{h}_i.
\end{array}  \right.
\end{equation}
Consequently,
\begin{equation}
\beta_{N_i} - \alpha_{N_i-1} = (\beta_{N_i+1} -\alpha_{N_i})(\alpha_{N_i} - h_i) + a_{i,N}\tilde{h}_i,
\end{equation}
that is,
\begin{equation}\label{main_eq1}
a_{i,N}\tilde{h}_i - (\beta_{N_i+1} - \alpha_{N_i})h_i = (\beta_{N_i} - \alpha_{N_i-1}) - (\beta_{N_i+1} - \alpha_{N_i})\alpha_{N_i}.
\end{equation}
Eq.(\ref{main_eq1}) is a nonlinear equation with indeterminates
$a_{i,N}, h_i$ and $\tilde{h}_i$ and it is hard to solve in
$\mathbb{Z}_p^*$.

\subsection{Dynamic key management} An efficient method for updating
the secure-filter is the main task in dynamic access control of our
proposed methods. Before   discussing the problems of dynamic key
management, we introduce some  properties for our secure-filters.

\begin{Thm}\label{thm1}
Let \begin{equation}f(x)=(x-a_1)(x-a_2)\cdots (x - a_n) + K_1
\end{equation} and
\begin{equation}\label{poly_g}
g(x)=(x-a_1)(x-a_2)\cdots(x-a_n)(x-a_{n+1}) + K_2
\end{equation}  be two
polynomials in $\mathbb{Z}_p^*$. Immediately, we have $f(x) =
\frac{g(x)-K_2}{(x-a_{n+1})} + K_1$ or $g(x) = (x-a_{n+1})(f(x)-K_1)
+ K_2$.

If $f(x)=\sum_{i=0}^n \alpha_i x^i $ and $g(x) =\sum_{i=0}^{n-1}
\beta_i x^i$, then we have

\begin{equation}\label{insert_formula}
\beta_j = \left \{ \begin{array}{ll}
1 & \mbox{ if } j=n+1 \cr
\alpha_{j-1} - a_{n+1}\alpha_j & \mbox{ if } j \in \{2,3,\cdots,n\} \cr
\alpha_0 - a_{n+1}\alpha_1 - K_1 & \mbox{ if } j=1 \cr
 (K_2 - K_1) -a_{n+1}\alpha_0  & \mbox{ if } j = 0
\end{array} \right.
\end{equation}
or
\begin{equation}\label{delete_formula}
\alpha_j = \left \{ \begin{array}{ll}
1 & \mbox{ if } j= n \cr
\beta_{j+1} + a_{n+1}\alpha_{j+1} & \mbox{ if } j \in \{1,2,\cdots,n-1\} \cr
\beta_1 + a_{n+1}\alpha_1 +K1 & \mbox{ if } j=0
\end{array}\right.
\end{equation}

\end{Thm}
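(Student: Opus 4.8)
The plan is to derive everything from the single polynomial identity
\[
g(x) = (x - a_{n+1})\bigl(f(x) - K_1\bigr) + K_2 ,
\]
so the first thing to record is the pair of displayed identities in the statement. By construction $f(x) - K_1 = (x-a_1)\cdots(x-a_n)$ and $g(x) - K_2 = (x-a_1)\cdots(x-a_n)(x-a_{n+1})$, hence $g(x) - K_2 = (x-a_{n+1})\bigl(f(x)-K_1\bigr)$ as polynomials in $\mathbb{Z}_p^*[x]$; the division $f(x) = (g(x)-K_2)/(x-a_{n+1}) + K_1$ is therefore exact, and the two forms are just rearrangements of one another. I will also use that $f$ is monic of degree $n$ (so $\alpha_n = 1$) and $g$ is monic of degree $n+1$ (so the expansion of $g$ actually runs up to $x^{n+1}$ with $\beta_{n+1} = 1$).

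For the insertion formula (\ref{insert_formula}) I would substitute $f(x)=\sum_{i=0}^n\alpha_i x^i$ into the identity and expand it as
\[
g(x) = \sum_{i=0}^{n}\alpha_i x^{i+1} \;-\; a_{n+1}\sum_{i=0}^{n}\alpha_i x^i \;-\; K_1 x \;+\; \bigl(a_{n+1}K_1+K_2\bigr),
\]
then read off the coefficient of $x^j$. In the generic range $2\le j\le n$ only the first two sums contribute, giving $\beta_j=\alpha_{j-1}-a_{n+1}\alpha_j$. The three extreme indices are where the stray terms land and must be checked individually: $j=n+1$ gives $\beta_{n+1}=\alpha_n=1$; $j=1$ picks up the $-K_1x$ term, giving $\beta_1=\alpha_0-a_{n+1}\alpha_1-K_1$; and $j=0$ is the constant term, giving $\beta_0 = K_2 + a_{n+1}K_1 - a_{n+1}\alpha_0$, which should then be reconciled with the displayed entry for $\beta_0$.

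For the deletion formula (\ref{delete_formula}) the cleanest route is to invert the recursion just obtained: solving $\beta_j=\alpha_{j-1}-a_{n+1}\alpha_j$ for $\alpha_{j-1}$ gives $\alpha_{j}=\beta_{j+1}+a_{n+1}\alpha_{j+1}$ for $1\le j\le n-1$, while $\beta_{n+1}=1$ gives $\alpha_n=1$ and rearranging the $\beta_1$ equation gives $\alpha_0=\beta_1+a_{n+1}\alpha_1+K_1$. (Alternatively, expand $x f(x) = g(x) + a_{n+1}f(x) + K_1 x - a_{n+1}K_1 - K_2$ and compare coefficients of $x^{j+1}$; this reproduces the same table directly.) I do not expect a genuine obstacle here: the whole statement is a finite coefficient comparison together with one polynomial division that is exact by construction. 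The only place any care is needed is the bookkeeping at the boundary indices $j\in\{0,1,n,n+1\}$, where the constant terms $K_1,K_2$ and the one-step shift in degree between $f$ and $g$ interact, and in noting that both recursions are meant to be run from the top index downwards.
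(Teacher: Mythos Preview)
Your approach is exactly the paper's: it too starts from $g(x)=(x-a_{n+1})(f(x)-K_1)+K_2$, substitutes $f(x)=\sum\alpha_i x^i$, expands, and reads off the $\beta_j$'s, then simply asserts that (\ref{delete_formula}) follows by the same coefficient comparison. Your treatment is in fact slightly more thorough, since you spell out the inversion giving (\ref{delete_formula}) and you correctly flag the $j=0$ case---your $\beta_0=K_2+a_{n+1}K_1-a_{n+1}\alpha_0$ agrees with the paper's own expansion $K_2-a_{n+1}(\alpha_0-K_1)$, so the displayed entry $(K_2-K_1)-a_{n+1}\alpha_0$ in the statement is a typo rather than something to ``reconcile.''
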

\begin{proof}
One has
\begin{equation}\label{poly_g_2}
\begin{array}{rl}
g(x)  = &(x-a_{n+1}) \prod_{i=0}^n(x-a_i) + K_2 \cr
     =& (x-a_{n+1})(f(x)-K_1) + K_2 \cr
     =& (x-a_{n+1})(\alpha_nx^n + \alpha_{n-1}x^{n-1} \cdots + \alpha_0 - K_1) + K_2 \cr
     =& \alpha_nx^{n+1} + (\alpha_{n-1} - a_{n+1}\alpha_n)x^n + \cdots + (\alpha_1  - a_{n+1}\alpha_2)x^2 \cr
      &+ (\alpha_0 - K_1 -a_{n+1}\alpha_1)x+K_2-a_{n+1}(\alpha_0 - K_1).
\end{array}
\end{equation}
Comparing the coefficients of each terms of $f(x)$ and $g(x)$, we
obtain Eq.(\ref{insert_formula}) and (\ref{delete_formula}).
\end{proof}

\begin{Cor}\label{Cor1}
Let $f(x) = (x-h_1)(x-a_1)\cdots(x-a_n) +K_1$ be a polynomial of
degree $n+1$ in $\mathbb{Z}_p^*$ and $\alpha_i$ be the coefficient
of the term $x^i$ of the general form of $f(x)$. Hence, $f(x) =
\sum_{i=0}^{n+1} \alpha_ix^i$.

If $g(x) = (x-h_2)(x-a_1)\cdots (x-a_n)(x-a_{n+1}) +K_2$ is a
polynomial of degree $n+2$ in $\mathbb{Z}_p^*$ then the coefficient
$\beta_j$ of $x^j$ of $g(x)$ can be computed by the formulas in
Theorem \ref{thm1}.
\end{Cor}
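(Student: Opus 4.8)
The plan is to realize the passage from $f$ to $g$ as a short chain of the two elementary coefficient updates already quantified in Theorem \ref{thm1} — deleting one linear factor from the product part, or inserting one — rather than as a single step. Set $P(x):=(x-a_1)(x-a_2)\cdots(x-a_n)$, so that $f(x)=(x-h_1)P(x)+K_1$ and $g(x)=(x-h_2)(x-a_{n+1})P(x)+K_2$. The product parts of $f$ and $g$ differ in more than one factor ($f$ carries $(x-h_1)$, while $g$ carries $(x-h_2)$ together with the extra factor $(x-a_{n+1})$), so $g$ is not obtained from $f$ by a single invocation of Theorem \ref{thm1}; instead I would route through $P$ in three applications of that theorem.

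Step 1, delete $h_1$. Since $p$ is prime, $\mathbb{Z}_p$ is a field and $(x-h_1)$ divides $f(x)-K_1$ exactly, with quotient $P(x)$; equivalently $f(x)=(x-h_1)\bigl(P(x)-0\bigr)+K_1$, which is precisely the relation of Theorem \ref{thm1} with $f$ in the role of the degree-$(n+1)$ polynomial there, with $P$ in the role of the degree-$n$ polynomial whose free additive constant we fix to be $0$, with $h_1$ as the deleted root, and with the larger polynomial's constant equal to $f(h_1)=K_1$. Applying the delete formula Eq.~(\ref{delete_formula}) then expresses every coefficient of $P$ in terms of the given $\alpha_i$ and of $h_1$ (the names in Theorem \ref{thm1} being re-bound accordingly).

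Step 2, insert $a_{n+1}$ and then insert $h_2$. From the coefficients of $P$ I would first form $Q(x):=(x-a_{n+1})P(x)$ via the insert formula Eq.~(\ref{insert_formula}), taking the extra root to be $a_{n+1}$ and both additive constants to be $0$; then I would form $g(x)=(x-h_2)\bigl(Q(x)-0\bigr)+K_2$ by a second application of Eq.~(\ref{insert_formula}), now with extra root $h_2$, with the smaller polynomial's constant $0$ and the larger polynomial's constant $K_2$. The output of this last step is exactly $(x-h_2)(x-a_{n+1})P(x)+K_2=g(x)$, so composing the three substitutions writes each $\beta_j$ as an explicit expression in the $\alpha_i$, in $h_1,a_{n+1},h_2$, and in $K_1,K_2$, which is the assertion. (The order of the two insertions is irrelevant: reinserting $(x-h_2)$ before $(x-a_{n+1})$ yields the same $g$.)

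I do not expect a genuine obstacle here — Theorem \ref{thm1} does all the work — so the hard part is really just bookkeeping: verifying that the division by $(x-h_1)$ in Step 1 is legitimate (it is, because $(x-h_1)$ is literally a factor of the product part and $\mathbb{Z}_p$ is a field), and keeping straight which value is fed into the two constant slots of Eq.~(\ref{insert_formula}) and Eq.~(\ref{delete_formula}) at each of the three steps, since the intermediate polynomials $P$ and $Q$ carry arbitrary (conveniently zero) constants while only the final insertion carries the genuine constant $K_2$.
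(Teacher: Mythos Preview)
Your proposal is correct and follows essentially the same approach as the paper: one application of the delete formula to remove $(x-h_1)$, followed by two applications of the insert formula to adjoin the new factors and the constant $K_2$. The only cosmetic difference is that the paper inserts $(x-h_2)$ before $(x-a_{n+1})$ while you do the reverse, a point you already flag as immaterial.
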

\begin{proof}
Since
$$g(x)=(x-a_{n+1})\left \{ (x-h_2) \left [ \frac{f(x)-K_1}{(x-h_1)} +0  \right ] +0  \right \} +K_2.$$
Hence, the coefficients of $G_1(x)=\frac{f(x)-K_1}{(x-h_1)} + 0 $
can be estimated by Eq.(\ref{delete_formula}),  and the coefficients
of $G_2(x)=(x-h_2)(G_1(x)-0) + 0)$ can be estimated by
Eq.(\ref{insert_formula}). Finally, we obtain  the coefficients of
$g(x)=(x-a_{n+1})(G_2(x)-0) + K_2$ by the Eq.(\ref{insert_formula}).
\end{proof}

\begin{Cor}\label{Cor2}
Under the  same assumptions in Corollary \ref{Cor1}. If
$h(x)=(x-h_3)(x-a_1)\cdots(x-a_{n-1})+K_3$ is a polynomial of degree
$n$, then the coefficient of $x^j$ of $h(x)$ can be obtained by
Theorem \ref{thm1}.
\end{Cor}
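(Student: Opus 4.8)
The plan is to imitate the proof of Corollary~\ref{Cor1}, realizing $h$ as the output of three successive elementary moves of Theorem~\ref{thm1} applied to $f$. Since $f(x)-K_1=(x-h_1)(x-a_1)\cdots(x-a_n)$ has both $a_n$ and $h_1$ among its roots, the expression
\begin{equation}
h(x) = (x-h_3)\left\{ \frac{1}{x-h_1}\left[ \frac{f(x)-K_1}{x-a_n} + 0 \right] + 0 \right\} + K_3
\end{equation}
makes sense as a polynomial identity: the inner division strips off the factor $(x-a_n)$, the next division strips off $(x-h_1)$, leaving $(x-a_1)\cdots(x-a_{n-1})$, and then $(x-h_3)(\,\cdot\,)+K_3$ rebuilds $h$.

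First I would delete the root $a_n$: put $G_1(x)=\dfrac{f(x)-K_1}{x-a_n}+0$. Matching this against the identity $f=\dfrac{g-K_2}{x-a_{n+1}}+K_1$ of Theorem~\ref{thm1}, the polynomial $f$ of the corollary plays the role of $g$, the constant $K_1$ plays the role of $K_2$, the root $a_n$ plays the role of $a_{n+1}$, and the new constant $0$ plays the role of $K_1$; hence the coefficients of $G_1$ are given by Eq.~(\ref{delete_formula}). Next, $G_1(x)=(x-h_1)(x-a_1)\cdots(x-a_{n-1})$ still has $h_1$ as a root, so the same move applies with $G_2(x)=\dfrac{G_1(x)-0}{x-h_1}+0$, and the coefficients of $G_2$ come from those of $G_1$ by Eq.~(\ref{delete_formula}) again (removed root $h_1$, both constants $0$). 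Finally $G_2(x)=(x-a_1)\cdots(x-a_{n-1})$ is monic of degree $n-1$ and $h(x)=(x-h_3)(G_2(x)-0)+K_3$, which is precisely the insertion move of Theorem~\ref{thm1}; so the coefficients of $h$ are obtained from those of $G_2$ via Eq.~(\ref{insert_formula}). Composing the three passes expresses every coefficient of $h$ explicitly in terms of the coefficients of $f$ together with $h_1,a_n,h_3,K_1,K_3$.

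I do not expect a substantive obstacle here, since the statement is just an iteration of the already proved Theorem~\ref{thm1}; the only point requiring attention is the bookkeeping --- at each stage one must check that the polynomial being divided genuinely carries the linear factor being removed, so that the quotient is again a polynomial and the hypotheses of Theorem~\ref{thm1} are met. This is immediate from the factored forms displayed above, and the degrees behave as expected ($n+1\to n\to n-1\to n$), confirming that the resulting $h$ indeed has degree $n$.
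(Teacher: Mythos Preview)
Your proposal is correct and follows essentially the same route as the paper: the paper also writes
\[
h(x)=(x-h_3)\left\{\frac{1}{x-h_1}\left[\frac{f(x)-K_1}{x-a_n}+0\right]+0\right\}+K_3,
\]
applies Eq.~(\ref{delete_formula}) twice (to strip $a_n$, then $h_1$), and finishes with Eq.~(\ref{insert_formula}) to adjoin $h_3$ and the constant $K_3$. Your intermediate polynomials $G_1,G_2$ are exactly the paper's $H_1,H_2$, and your added remarks on role-matching and degree bookkeeping only make the argument more explicit.
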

\begin{proof}
Since
$$ h(x)=(x-h_3) \left \{ \frac{1}{(x-h_1)} \left [ \frac{f(x)-K_1}{(x-a_n)} +0 \right ]+ 0 \right \} + K_3,$$
the coefficients of $H_1(x)=\frac{f(x)-K_1}{x-a_n}$ can be evaluated
by Eq. (\ref{delete_formula}), the coefficients of $H_2(x) =
\frac{H_1(x)-0}{(x-h_1)}$ can be computed by Eq.
(\ref{delete_formula}). And we obtain the coefficients of
$h(x)=(x-h_3)(H_2(x)-0) +K_3$ by Eq. (\ref{insert_formula}).
\end{proof}

\begin{description}
\item[Inserting new security class:] ~\\
Assume that a new security class $u_r$ is inserted into the
hierarchy such that $u_i < u_r < u_j$.
\begin{enumerate}
\item Determine the secret information $u_r, K_r, n_r, h_r$, the public information $P_r , l_r$ and the
     public secure-filter $f_r(x)$ by the steps 2-6 in key
     generation phase of our proposed method.

\item For each $u_s < u_r$ (i.e. the strictly successor of $u_r$)
\begin{enumerate}
\item Update the secret number $h_s$ and the public number $l_s$.
\item Update the secure-filter $f_s(x)$ of $u_s$ by Corollary \ref{Cor1}.

\end{enumerate}

\end{enumerate}

\item[Removing existing security classes :] ~\\
Assume that an existing member $u_r$ is   removed from a
hierarchy $(S,\leq)$ and $S_r = \{ u_s | u_r \leq u_s \}$ is the
set of all strictly predecessors of $u_r$.

\begin{enumerate}
\item For each strictly predecessor $u_s$ of $u_r$,
\begin{enumerate}
\item CA chooses a new secret number $h_s$ and a new
    public number $l_s$ randomly.
\item Update the secret filter $f_s$ by Corollary \ref{Cor2}.

\end{enumerate}
\item Delete the security class $u_r$ from the hierarchy $(S,\leq)$ and discard
the secret and public information of $u_r$.

\end{enumerate}

\end{description}

\begin{figure}
{\centering
\includegraphics[width=.5\textwidth]{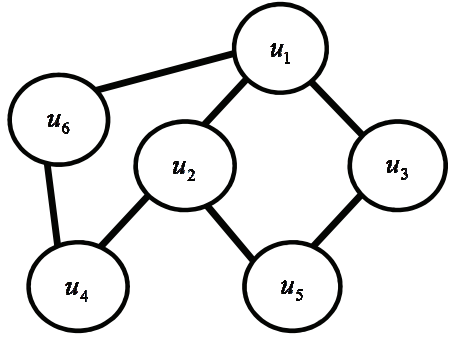}
\caption{ A new security class $u_6$ is inserted into the hierarchy $(S,\leq)$. }\label{hierarchy_2}
}
\end{figure}

\begin{Example}
In Fig. \ref{hierarchy_2}, a new security class $u_6$ is inserted
into the hierarchy $S$ in Fig. \ref{hierarchy_1} such that $u_4 <
u_6 < u_1$. The secure-filter of $u_6$ is
\begin{equation}
f_6(x) = (x-h_6)(x-a_{6,1}) + L_{l_6}(K_6).
\end{equation}
Since $u_4$ is the only strictly successor of $u_6$, CA only needs
to update the secure-filter of $u_4$. Assume that CA chooses a new
secrete number $\tilde{h}_4$ and a public number $\tilde{l}_4$
randomly.  After $u_6$ is inserted, the secure-filter of
$u_4$becomes
\begin{equation}\label{f_4}
f_4(x) = (x - \tilde{h}_4)(x-a_{4,1})(x-a_{4,2})(x-a_{4,6}) + L_{\tilde{l}_4}(K_4).
\end{equation}
Now, let us compute the coefficients of this new secure-filter in
Eq.(\ref{f_4}) from the original secure-filter,$(x -
h_4)(x-a_{4,1})(x-a_{4,2}) + L_{l_4}(K_4)$, by Eqs.
(\ref{insert_formula}) and (\ref{delete_formula}). Assume that
\begin{equation}
 (x - h_4)(x-a_{4,2})(x-a_{4,1})+L_{l_4}(K_4) = \sum_{i=0}^3
\alpha_i^1 x^i,
\end{equation}

\begin{enumerate}
\item Using Eq. (\ref{delete_formula}), the coefficients of the indeterminate $x$ of
\begin{equation} (x-a_{4,1})(x-a_{4,2})+0 = \sum_{i=0}^2 \alpha_i^2x^i \end{equation}
can be obtained by
\begin{equation}
\begin{array}{ll}
\alpha^2_2 & = 1 \cr
\alpha^2_1 &  = -(a_{4,1}+a_{4,2})= \alpha^1_2 + h_4 \alpha^2_2 \cr
\alpha^2_0 &= a_{4,1}a_{4,2}  = \alpha^1_1 + 0  - h_4\alpha^2_1
\end{array}
\end{equation}

\item Using Eg. (\ref{insert_formula}), CA computes the coefficients of
\begin{equation}
(x-\tilde{h}_4)(x-a_{4,1})(x-a_{4,2}) + 0 = \sum_{i=0}^3
\alpha^3_ix^i; \end{equation}

\begin{equation}
\begin{array}{ll}
\alpha^3_3 & = 1 \cr
\alpha^3_2 &  = -(a_{4,1}+a_{4,2}) - \tilde{h}_4 = \alpha^2_1 - \tilde{h}_4 \alpha^2_2 \cr
\alpha^3_1 &  = a_{4,1}a_{4,2} + \tilde{h}_4(a_{4,1}+a_{4,2}) = \alpha^2_0   - \tilde{h}_4\alpha^2_1 \cr
\alpha^3_0 &  = -\tilde{h}_4a_{4,1}a_{4,2} = -\tilde{h}_4\alpha^2_0 + 0.
\end{array}
\end{equation}
\item
Using Eq. (\ref{insert_formula}) again, the coefficients of
\begin{equation}
(x-\tilde{h}_4)(x - a_{4,1})(x-a_{4,2})(x-a_{4,6})+L_{\tilde{l}_4}(K_4) =
\sum_{i=0}^4 \alpha^4_ix^i\end{equation}
are
\begin{equation}
\begin{array}{ll}
\alpha^4_4 & = 1 \cr
\alpha^4_3 &  = -(a_{4,1}+a_{4,2}+\tilde{h}_4) - a_{4,6} = \alpha^3_2 - a_{4,6}\alpha^3_3\cr
\alpha^4_2 &  = (a_{4,1}a_{4,2}+ \tilde{h}_4a_{4,1} +\tilde{h}_4 a_{4,2}) -a_{4,6}(-\tilde{h}_4-a_{4,1}-a_{4,2}) = \alpha^3_1 - a_{4,6}\alpha^3_2\cr
\alpha^4_1 & = -\tilde{h}_4a_{4,1}a_{4,2} - a_{4,6}(a_{4,1}a_{4,2})+\tilde{h}_4a_{4,1} + \tilde{h}_4a_{4,2} ) =  \alpha^3_0 - a_{4,6}\alpha^3_1  \cr
\alpha^4_0 & = a_{4,1}a_{4,2}a_{4,6}\tilde{h}_4 + L_{\tilde{l}_4}(K_4)    = a_{4,6}\alpha^3_0 + L_{\tilde{l}_4}(L_{-l_4}(L_{l_4}(K_4))).
\end{array}
\end{equation}

\item
Finally, CA updates the secure-filter of $u_4$ as the coefficients
$\{\alpha^4_0, \alpha^4_1, \cdots, \alpha^4_4 \}$.
\end{enumerate}

\end{Example}

\begin{figure}
{\centering
\includegraphics[width=.5\textwidth]{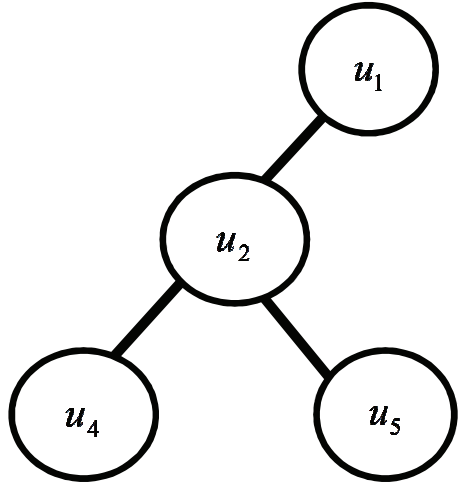}
\caption{ An existing member $u_2$ is removed from the hierarchy $(S,\leq)$. }\label{hierarchy_3}
}
\end{figure}

\begin{Example}
In Fig. \ref{hierarchy_3}, let $u_3$ be removed from the hierarchy
$(S, \leq)$ so that the relationship $u_5 < u_3 < u_1$  breaks up.
CA removes all information related to $u_3$ and choose four random
numbers $\tilde{h}_4, \tilde{l}_4$, $\tilde{h}_5$ and $\tilde{l}_5$.
CA replaces the secure-filter of $u_5$ by

\begin{equation}
f_5(x) = (x-\tilde{h}_5)(x-a_{5,1})(x-a_{5,2}) +L_{\tilde{l}_5}(K_5).
\end{equation}

Assume that $(x-h_5)(x-a_{5,1})(x-a_{5,2})(x-a_{5,3}) + L_{l_5}(K_5)
= \sum_{i=0}^4{\beta^1_i}x^i$.

\begin{enumerate}
\item  The coefficients of $(x-a_{5,1})(x-a_{5,2})(x-a_{5,3})+0 =
\sum_{i=0}^3 \beta^2_ix^i$ are
\begin{equation}
\begin{array}{ll}
\beta^2_3 & = 1 \cr
\beta^2_2 &= \beta^1_3 + h_5\beta^2_3   \cr
\beta^2_1 & = \beta^1_2 + h_5\beta^2_2    \cr
\beta^2_0 & = \beta^1_1 -h_5(\beta^2_1) + 0
\end{array}
\end{equation}

\item The coefficients of $(x-a_{5,1})(x-a_{5,2})+0 =
\sum_{i=0}^2\beta^3_ix^i$ are
\begin{equation}
\begin{array}{ll}
\beta^3_2 & = 1 \cr
\beta^3_1 &= \beta^2_2 + a_{5,3}\beta^3_2  \cr
\beta^3_0 & = \beta^2_1 + a_{5,3}\beta^3_1.
 \end{array}
\end{equation}

\item The coefficients of $(x-\tilde{h}_5)(x-a_{5,1})(x-a_{5,2}) +
L_{\tilde{l}_5}(K_5)$ are
\begin{equation}
\begin{array}{ll}
\beta^4_3 & = 1 \cr
\beta^4_2 &= \beta^3_1 - h_5\beta^3_2  \cr
\beta^4_1 & = \beta^3_0 -0 - h_5\beta^3_1   \cr
\beta^4_0 & =   -h_5(\beta^3_0 - 0) + L_{\tilde{l}_5}(L_{-l_5}(L_{l_5}(K_5))).
\end{array}
\end{equation}

\item Then, CA publishes these coefficients $\beta^4_0, \cdots
\beta^4_3$.
\end{enumerate}
\end{Example}

\section{Conclusion}

Our proposed method in this paper is a secure solution of access
control in hierarchy and the scheme of dynamic key management is
very simple and efficient. Furthermore, Wu et.al.'s method becomes a
secure method under our secure-filter.

\begin{description}

\item[Method 2:] (Revised  Wu et. al. method) \\
    \begin{description}
    \item[Key generation phase] - \\
        \begin{enumerate}
        \item Choose distinct secret integer $s_i \in \mathbb{Z}_p$, $ 0 \leq i \leq n-1$.
        \item CA chooses a small integer $d$ greater than 1 and publishes $d$.
        \item For each $u_i \in S$ do step 3-5
        \item Determine the set $\Sigma_i = \{ g_i^{s_j} | u_j \in S_j \}$.
        \item CA randomly choose a secret integer $h_i$ and a public integer $l_i$.
        \item Determine  the encryption key $K_i$ such that the leading coefficient of the
          radix $d$ expansion of $K_i$ is less than the
     leading coefficient of the radix $d$ expansion of
     $p$ and all coefficients of the radix $d$ expansion
     of $K_i$ are not all equal.
        \item Set
        \begin{equation}
        \begin{array}{rl}
        f_i(x) =& (x - h_i)(x - g_i^{s_0})(x -g_i^{s_1})\cdots(x-g_i^{s_{N_i}}) +L_{l_i}(K_i) \cr
               = &(x-h_i)\cdot\prod_{a \in \Sigma_i} (x-a) +L_{l_i}(K_i)
        \end{array}
        \end{equation}
        \item $u_i$ keeps $s_i$ and $K_i$ private, and makes $(f_i, g_i, l_i)$  public.
        \end{enumerate}

    \item[Key derivation phase ] -
    Consequently, the predecessor $u_j$ of $u_i$ can compute
    $g_i^{s_j}$ from his secure number $s_j$ and the public
    number $g_i$. Then the user $u_j$  obtain $L_{l_i}(K_i)$
    by $f_i(g_i^{s_j})$ as well as the cryptographic key
    $K_i$ by a cyclic  $-l_i$-shift operator$L_{-l_i}$, $K_i
    = L_{-l_i}(L_{l_i}(K_i))$.

\end{description}
\end{description}

\section*{Acknowledgements}
This paper is partially supported by NSC, Taiwan.


\end{document}